\newcommand{\phil}[1]{\textcolor{blue}{#1}}
\newtheorem{assumption}{Assumption} 
\newtheorem{definition}{Definition} [section]
\newtheorem{theorem}{Theorem} 
\newtheorem{proposition}{Proposition} 
\newtheorem{lemma}{Lemma} [section] 
\newtheorem*{remark}{Remark}
\begin{document}
%
\title{
    The Impact of Vaccine Hesitancy on Epidemic Spreading}
%
%
%

\author{C.~H.~Leung,~Mar\'ia~E.~Gibbs,~and~Philip~E.~Par\'e*
    \thanks{*C. H. Leung and P. E. Par\'e are with the School of Electrical and Computer Engineering, Purdue University. M. E. Gibbs is with the Department of Aerospace Engineering and Engineering Mechanics, The University of Texas at Austin. This work was supported in part by the National Science Foundation, grants NSF-CNS \#2028738 and NSF-ECCS \#2032258.}
}
\maketitle

\begin{abstract}
    The COVID-19 pandemic has devastated the world in an unprecedented way, causing enormous loss of life. Time and again, public health authorities have urged people to become vaccinated to protect themselves and mitigate the spread of the disease. However, vaccine hesitancy has stalled vaccination levels in the United States. This study explores the effect of vaccine hesitancy on the spread of disease by introducing an SIRS-V$_\kappa$ model, with compartments of susceptible (S), infected (I), recovered (R), and vaccinated (V). We leverage the concept of carrying capacity to account for vaccine hesitancy by defining a vaccine confidence level $\kappa$, which is the maximum number of people that will become vaccinated during the course of a disease. The inverse of vaccine confidence is vaccine hesitance, $(\frac{1}{\kappa})$. We explore the equilibria of the SIRS-V$_\kappa$ model and their stability, and illustrate the impact of vaccine hesitance on epidemic spread analytically and via simulations.


\end{abstract}


%
\IEEEpeerreviewmaketitle


\section{Introduction}


%
%
%
%


The origins of epidemiological modeling extend as far back as the 18th century \cite{bookModelingInfectiousDiseases}, with Bernoulli's 1760 treatise on smallpox considered one of its foundational documents~\cite{BernoulliRevisited}. In the 20th century, the field experienced significant growth, including the introduction of approaches that employed dynamic systems theory \cite{bookModelingInfectiousDiseases}. In particular, Kermack and McKendrick's 1932 \textit{Contributions to the Mathematical Theory of Epidemics} introduced compartmental models, which are heavily used to this day \cite{kermack1932contributions,PARE2020Overview}. With this method, members of the population are compartmentalized based on their current state of health. The most common models are the SIS, SIR, and SEIR models, in which people are grouped as: \textit{susceptible (S), exposed (E), infected (I),} or \textit{recovered (R)} \cite{PARE2020Overview}. Expansions of the aforementioned models have also been developed. For instance, in light of the COVID-19 pandemic, Giordano \textit{et al.} constructed a SIDARTHE-V model that subdivides the infected population according to the detection and severity of their symptoms and includes vaccination levels \cite{
    giordano2021sidarthe-vaccines}.

We include a vaccinated compartment in our model because vaccines are a powerful tool in mitigating epidemic spread and saving lives \cite{bookModelingInfectiousDiseases}. The modern practice of vaccination is rooted in Edward Jenner's 1796 experiments on smallpox \cite{jenner1800inquirysmall_cow_pox,hilleman2000historyofvaccines}. Since then, incredible breakthroughs have been made in vaccinations, including the use of vaccines to eradicate smallpox \cite{henderson2011eradicationofsmallpox}. Today, vaccines are widely used around the world as early as infancy to prevent infections \cite{WHO_ImmunizationCoverage}. Due to the power of vaccines, the drastic death rates caused by the spread of the novel SARS-CoV-2 virus have spurred worldwide vaccination efforts to combat the virus \cite{dong2020JohnHopkinsDataInteractiveDashboard,OurWorldinData_vaccines2021global}. In the U.S., three vaccines are currently available: Pfizer-BioNTech, Moderna, and Johnson \& Johnson’s Janssen \cite{CDC_vaccine_types}.

Nevertheless, vaccine hesitancy remains a real, growing concern for public health \cite{dube2013vaccine_Hesitancyoverview}. An individual's vaccine hesitancy is affected by context, interpersonal experiences, and disease-specific metrics \cite{SAGEGroup2015vaccinehesitancydef}. In the United States, vaccine hesitancy is especially apparent regarding the reception of the COVID-19 vaccine. As of September 11, 2021, 53\% of Americans have been fully vaccinated against COVID-19, with an additional 9.2\% partially vaccinated \cite{OurWorldinData_vaccines2021global}. However, these levels are lower than those predicted by a February 2021 survey by Pew Research Center. According to the survey, 69\% of Americans reported they would probably, definitely, or had already been at least partially vaccinated \cite{Pew_AmericanVaccineOpinions}. In fact, the number of vaccines administered daily in the U.S. in September 2021 is less than half of those administered daily in mid-April 2021 \cite{OurWorldinData_vaccines2021global}.

There have been dozens of studies on COVID-19 vaccine hesitance around the world \cite{sowa2021VaccHesitancyPoland,oliveira2021Covid-19VaccineHesitancyBrazil,aw2021VaccineHes_High_Income_Countries}. There are also studies on how vaccination levels affect the dynamics of disease spread. Pires and Crokidakis explore the effect of vaccine opinions in the spread of a disease \cite{pires2017Epidemic_dynamicsw_Vaccination}. Similarly, \cite{pires2018sudden_ContinousOpinions} considers the dynamics when opinions are continuous. However, to the best of the authors' knowledge, the correlation between a maximum vaccination capacity $\kappa$ and the equilibria of system dynamics \cite{dhondt1988carrying}
has not yet been established. This work analytically illustrates the effect of $\kappa$ on the dynamics of COVID-19 spread. Specifically, $\kappa$ is shown to determine the endemic and disease-free equilibria of the SIRS-V$_\kappa$ system.

\section{SIRS-V$_{\kappa}$ Model}


In this paper, we take an SIRS model and add a vaccinated compartment with a vaccine confidence $\kappa$ as the upper-bound of vaccination level. The result is an \textit{SIRS-V$_\kappa$} model: \textit{susceptible (S), infected (I), recovered (R), vaccinated (V)}.
Figure \ref{fig:SIHRVDcompartments} gives a graphical representation of the compartmental flow.
\begin{definition}[Vaccine Hesitance]\label{def:vaccine_hesitance}
    We define the vaccine hesitance as $\kappa^{-1}$, where $\kappa$ is the vaccine confidence.
\end{definition}

Each compartment in the grouped SIRS-V$_\kappa$ model is a scalar between $0$ and $1$ modeling the fraction of the population in each epidemic compartment at time $t$.
The following are the equations describing the flow from one compartment to another in the group SIRS-V$_\kappa$ model:

\begin{subequations}\label{eq:sirsv_group}
    \begin{align}
        \dot{S} & = -\beta S I - \rho \left( 1 - \frac{V}{\kappa}\right) S + \omega R \label{eq:sirsv_group_S} \\
        \dot{I} & = \beta S I -\gamma I \label{eq:sirsv_group_I}                                               \\
        \dot{R} & = \gamma I - \omega R -\rho\left(1 - \frac{V}{\kappa}\right) R \label{eq:sirsv_group_R}      \\
        \dot{V} & = \rho\left(1 - \frac{V}{\kappa}\right) (S + R). \label{eq:sirsv_group_V}
    \end{align}
\end{subequations}
The parameters are defined as follows:
\begin{itemize}
    \item $\beta_{}$ is the frequency-dependent transmission rate
    \item $\gamma_{}$ is the recovery rate
    \item $\rho_{}$ is the rate of vaccination roll-out
    \item $\kappa$ is the vaccine confidence of the population
    \item $\omega$ is the rate of waning natural immunity\phil{.}
\end{itemize}
All of the above parameters are defined in the range of $(0, \infty)$.
\begin{figure}
    \centering
    \vspace{1ex}
    \includegraphics[width=\columnwidth]{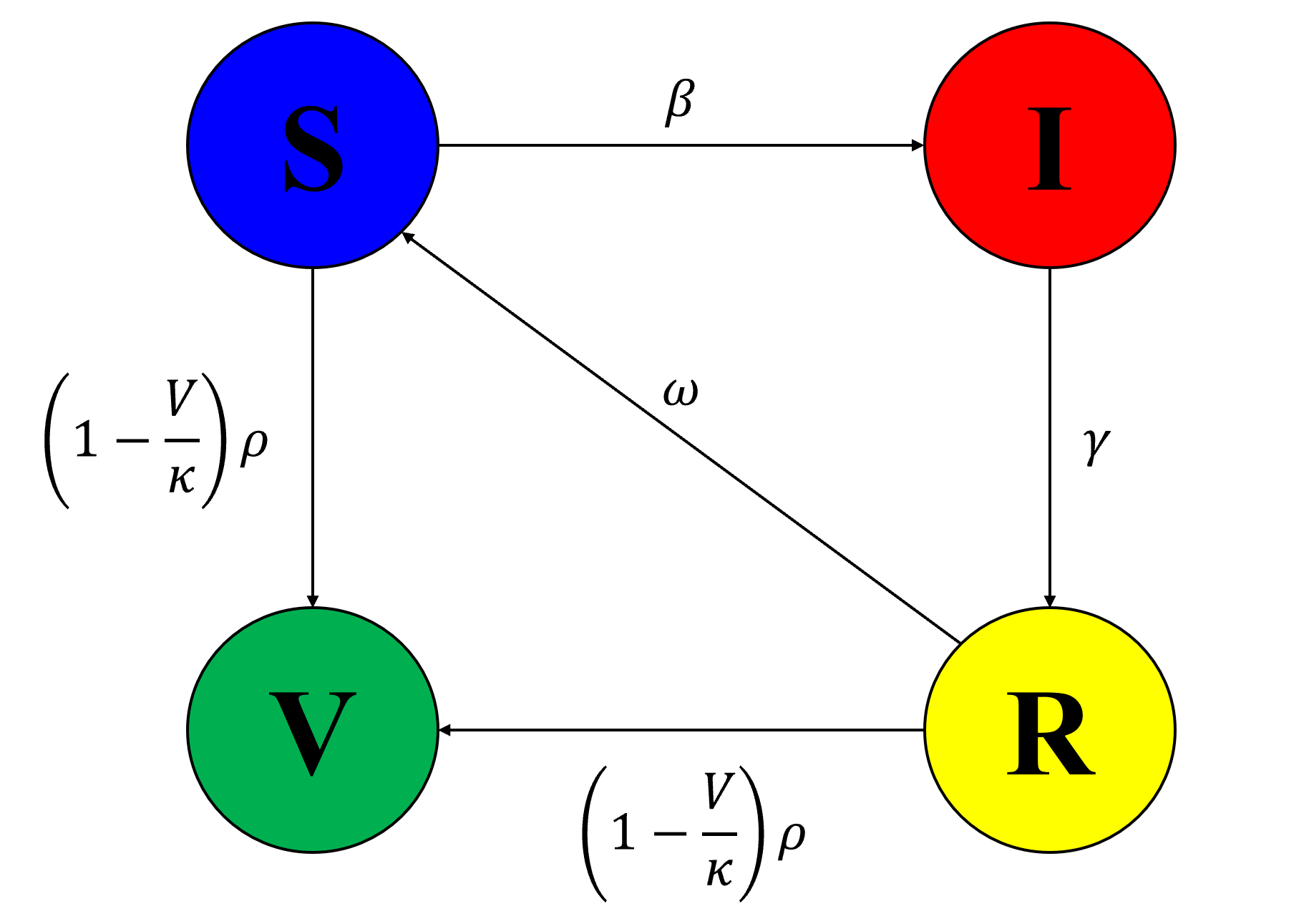}
    \caption{SIRS-V$_\kappa$ Compartments.}
    \label{fig:SIHRVDcompartments}
\end{figure}

\begin{assumption}[Frequency dependent and population preserving system]\label{asm:00}
    We assume $S(t_0), I(t_0), R(t_0)\in [0, 1]$, $V(t_0) \in [0, \underline{\kappa}]$, where $\underline{\kappa} := \min\{1, \kappa\}$, and $S(t_0) + I(t_0) + R(t_0) + V(t_0) = 1$ $\forall i \in \{1, ..., n\}$. Furthermore, all parameters of the system are strictly positive.
\end{assumption}
\begin{lemma}[Permissible range of $V$]\label{lem:kappa_range}
    Let Assumption~\ref{asm:00} hold, then
    $0 \leq V(t) \leq \underline{\kappa}\ \ \forall t \geq t_0, \forall i \in \{1, ..., n\}$.
\end{lemma}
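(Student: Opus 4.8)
The plan is to show that the feasible region $\mathcal{D} := \{(S,I,R,V) : S,I,R \ge 0,\ V \ge 0,\ V \le \kappa,\ S+I+R+V=1\}$ is forward invariant under \eqref{eq:sirsv_group}, and then read off the claimed bounds on $V$. Since the grouped model is scalar, the index $i$ is vestigial and I treat a single trajectory. I would organize the argument around three ingredients: conservation of the total population, nonnegativity of every compartment, and a dedicated bound $V \le \kappa$ coming from the logistic (carrying-capacity) structure of \eqref{eq:sirsv_group_V}.

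First I would verify conservation: summing \eqref{eq:sirsv_group_S}--\eqref{eq:sirsv_group_V} gives $\dot S + \dot I + \dot R + \dot V = 0$, so $S(t)+I(t)+R(t)+V(t) = 1$ for all $t \ge t_0$ by Assumption~\ref{asm:00}. Next I would establish nonnegativity by a Nagumo-type sub-tangentiality check on the faces of the nonnegative orthant: on $\{S=0\}$ one has $\dot S = \omega R \ge 0$, on $\{I=0\}$ one has $\dot I = 0$, on $\{R=0\}$ one has $\dot R = \gamma I \ge 0$, and on $\{V=0\}$ one has $\dot V = \rho(S+R) \ge 0$, each using only the nonnegativity of the remaining coordinates. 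Since the vector field points into (or along) the orthant on every face, the orthant is forward invariant, giving $S,I,R,V \ge 0$; in particular $V(t) \ge 0$. The nonnegativity of $I$ can alternatively be seen directly, since \eqref{eq:sirsv_group_I} is separable in $I$ and yields $I(t) = I(t_0)\exp\!\big(\int_{t_0}^t (\beta S - \gamma)\,ds\big) \ge 0$.

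For the upper bound I would exploit that $V=\kappa$ is a null-cline of \eqref{eq:sirsv_group_V}. Setting $W := \kappa - V$ turns \eqref{eq:sirsv_group_V} into the linear, time-varying equation $\dot W = -\tfrac{\rho}{\kappa}(S+R)\,W$, whose integrating-factor solution $W(t) = W(t_0)\exp\!\big(-\tfrac{\rho}{\kappa}\int_{t_0}^t (S+R)\,ds\big)$ preserves the sign of $W$. Since $V(t_0) \le \underline{\kappa} \le \kappa$ gives $W(t_0) \ge 0$, we obtain $W(t) \ge 0$, i.e. $V(t) \le \kappa$ for all $t$; note this step needs neither the sign of $S+R$ nor conservation. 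Combining with conservation and nonnegativity, which force $V = 1 - (S+I+R) \le 1$, yields $V(t) \le \min\{1,\kappa\} = \underline{\kappa}$, completing the bound $0 \le V(t) \le \underline{\kappa}$.

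I expect the main obstacle to be the nonnegativity step rather than the bounds on $V$ themselves. The $S$, $R$, and $V$ equations are mutually coupled through the vaccination term $\rho(1-V/\kappa)(\cdot)$, so one cannot decouple them as for $I$; the cleanest remedy is the Nagumo/essential-nonnegativity framework above, which only requires checking the vector field on each coordinate face and therefore sidesteps the coupling. A secondary technical point is global existence: local existence together with the forward invariance of the compact set $\mathcal{D}$ rules out finite-time escape, so the bounds in fact hold on all of $[t_0,\infty)$.
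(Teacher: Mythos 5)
Your proposal is correct, and it is more complete than the paper's own argument, which is a terse sub-tangentiality check on the $V$ coordinate alone: the paper observes that $\dot V = \rho(S+R)\ge 0$ when $V=0$ and that $\dot V=0$ when $V=\underline{\kappa}$, evaluated essentially at $t_0$, and implicitly leans on $S,R\ge 0$ (which is only formally established in the subsequent lemma) and does not really address the case $\kappa>1$, where the binding constraint $V\le 1$ comes from conservation rather than from the carrying-capacity term. You handle both of these points cleanly: the Nagumo/essential-nonnegativity check on every coordinate face gives $S,I,R,V\ge 0$ without circularity, and the substitution $W=\kappa-V$ with $\dot W=-\tfrac{\rho}{\kappa}(S+R)W$ gives $V\le\kappa$ by sign preservation of the integrating-factor solution, independently of the sign of $S+R$ — a genuinely different and more robust route to the upper bound than the paper's boundary evaluation. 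Combining with $V=1-(S+I+R)\le 1$ then yields $V\le\underline{\kappa}$ exactly as needed, and your remark on global existence via forward invariance of the compact simplex closes a gap the paper leaves implicit. The only cost of your approach is that it front-loads the nonnegativity and conservation arguments that the paper defers to its Lemma on the admissible range of states; if you wanted to match the paper's modular structure you could state those as a preliminary step, but as a self-contained proof of this lemma yours is the stronger one.
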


\begin{proof}
    Suppose Assumption \ref{asm:00} holds,
    we first note that $\dot{V}(t_0) = \rho(S(t_0) + R(t_0)) \geq 0$ when $V(t_0) = 0$. Furthermore, $\lim_{V(t_0)\downarrow 0} \dot{V}(t_0) = \rho(S(t_0) + R(t_0)) \geq 0$.
    Then, consider the upper bound of $V(t)$, $\dot{V}(t_0) = 0$ when $V(t_0) = \underline{\kappa}$. Moreover, for any $V(t_0) \leq \underline{ \kappa}$, $\lim_{V(t_0)\uparrow \kappa} \dot{V}(t_0) \geq 0$. Therefore, $v(t) \in [0, \underline{\kappa}]$ $\forall t \geq t_0$ for all permissible initial values in Assumption \ref{asm:00}.
\end{proof}

The following lemma
shows that the bounds on the initial state apply to all states for all $t\geq t_0$.

\begin{lemma}[Admissible range of states values] \label{lem:00}
    Let Assumption~\ref{asm:00} hold. Then
    $S(t), I(t), R(t), V(t) \in [0, 1]$ and $S(t) + I(t) + R(t) + V(t) = 1$ $\forall t \geq 0, \forall i \in \{1, ..., n\}$.
\end{lemma}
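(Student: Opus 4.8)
The plan is to prove the two assertions separately: first the conservation identity $S+I+R+V=1$, and then the componentwise nonnegativity, from which membership in $[0,1]$ is immediate.

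For the conservation identity I would simply add the four equations in~\eqref{eq:sirsv_group} and track the cancellations. Forming $\dot S+\dot I+\dot R+\dot V$, the transmission terms $\mp\beta S I$ cancel, the waning terms $\pm\omega R$ cancel, the recovery terms $\mp\gamma I$ cancel, and the two loss terms $-\rho(1-V/\kappa)S$ and $-\rho(1-V/\kappa)R$ are matched exactly by the single gain term $\rho(1-V/\kappa)(S+R)$ appearing in $\dot V$. Hence $\frac{d}{dt}\big(S+I+R+V\big)=0$, so the total is constant in $t$ and equals its initial value $S(t_0)+I(t_0)+R(t_0)+V(t_0)=1$ by Assumption~\ref{asm:00}.

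For the bounds, the crux is nonnegativity of each state; the upper bound of $1$ then follows at once, since $S=1-I-R-V\le 1$ once $I,R,V\ge 0$, and symmetrically for the other coordinates. To obtain nonnegativity I would argue in an order dictated by the coupling, using at each boundary face that the vector field points inward. First, $\dot I=(\beta S-\gamma)I$ is linear in $I$ with a coefficient that is bounded along any solution, so $I(t)=I(t_0)\exp\!\big(\int_{t_0}^t(\beta S(\tau)-\gamma)\,d\tau\big)\ge 0$. Next, on the face $R=0$ one has $\dot R=\gamma I\ge 0$, so $R$ cannot exit through zero and $R(t)\ge 0$. Then on the face $S=0$ one has $\dot S=\omega R\ge 0$, giving $S(t)\ge 0$. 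Finally, $V(t)\in[0,\underline\kappa]$ already holds by Lemma~\ref{lem:kappa_range}, and since $\underline\kappa=\min\{1,\kappa\}\le 1$ this yields $V(t)\in[0,1]$.

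The main obstacle is making the nonnegativity argument rigorous rather than merely checking the sign of each derivative on its zero set, because of the apparent circularity in the coupling: nonnegativity of $R$ is used to sign $\dot S$ on $S=0$, and nonnegativity of $I$ to sign $\dot R$ on $R=0$. I would resolve this by observing that the dependence is in fact acyclic, with $I$ signed unconditionally through its multiplicative form, then $R$, then $S$; or, more cleanly, by invoking a standard forward-invariance (subtangentiality / Nagumo) theorem for the closed positive orthant under the polynomial, hence locally Lipschitz, vector field. The latter route has the additional benefit of simultaneously confirming that the trajectory remains in the compact, forward-invariant simplex $\{S,I,R,V\ge 0,\ S+I+R+V=1\}$ and therefore exists for all $t\ge t_0$.
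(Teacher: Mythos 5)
Your proposal is correct and follows essentially the same route as the paper: sum the four equations to get conservation of $S+I+R+V$, establish nonnegativity by checking that the vector field points inward on each boundary face (the paper phrases this as $\dot z_j=-z_jf_j+g_j$ with $g_j\ge 0$, which is the same subtangentiality observation), and deduce the upper bound of $1$ from nonnegativity plus the conservation identity. Your version is somewhat more careful than the paper's in that you resolve the apparent circularity by ordering the components ($I$ via an integrating factor, then $R$, then $S$) and explicitly invoke a Nagumo-type forward-invariance theorem, which the paper leaves implicit.
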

\begin{proof}
    We first observe that $\dot{S}(t) + \dot{I}(t) + \dot{R}(t) + \dot{V}(t) = 0$ $\forall t \geq 0$
    by summing over \eqref{eq:sirsv_group_S}-\eqref{eq:sirsv_group_V}, then by taking the integration:
    \begin{align*}
        \int_{t_0}^t \dot{S}(\tau) + \dot{I}(\tau) + \dot{R}(\tau) + \dot{V}(\tau) d\tau & = 0                     \\
        S(t) + I(t) + R(t) + V(t)                                                        & = 1\ \ \forall t \geq 0
    \end{align*}
    since the initial condition is given as $S(t_0) + I(t_0) + R(t_0) + V(t_0) = 1$.

    To show that $S(t), I(t), R(t), V(t) \in [0, 1]$ $\forall t \geq 0$, we rewrite the state vector as $z(t) \in \mathbb{R}^{4}$, then
    %
    %
    we note that for any $j \in \{1, \hdots, 4\}$, $\dot{z}_j = -z_jf_j(z_{k\neq j}) + g_j(z)$ where $f_j(z_{k\neq j}), g_j(z) \geq 0$ $\forall z \geq 0$, also $z_{k\neq j}$ is the state vector $z$ without its $j^{th}$ entry.
    This applies to $\dot{z}_j = \dot{V}$ also, as we have demonstrated in Lemma~\ref{lem:kappa_range}.
    Then, we observe that $\lim_{z_j\downarrow 0}\dot{z}_j(t) = g_j(z) \geq 0\ \ \forall z \geq 0$. Therefore, $z(t) \geq 0$ when $z(t_0) \geq 0$.
    Then $z_j(t) \leq 1$ follows directly from  $z_j(t) \geq 0$ and $S(t) + I(t) + R(t) + V(t) = 1$.
\end{proof}

We denote the set of admissible states of \eqref{eq:sirsv_group} by $\mathcal{S} = \{(S(t), I(t), R(t), V(t)):\ S(t) + I(t) + R(t) + V(t) = 1\ \wedge\ S(t), I(t), R(t) \in [0, 1]\ \wedge\ V(t) \in [0, \underline{\kappa}]\ \forall t \geq t_0\}$.

\section{
  Stability Analysis}
This section will lay down the existence and uniqueness of the two equilibria of the SIRS-V$_\kappa$ model and their stability conditions.
We denote an equilibrium of the SIRS-V$_\kappa$ model as $(S^*,I^*,R^*,V^*)$ where $S^*$, $I^*$, $R^*$, and $V^*$ are the steady-state values of \eqref{eq:sirsv_group} as $t \to \infty$.
\begin{definition}[Disease-Free Equilibrium]\label{def:DFE}
    A disease-free equilibrium (DFE) is an equilibrium with the steady-state infected level $I^{(d)} = 0$ $\forall i \in \{1,...,n\}
    $, where $\dot{S}^{(d)} = \dot{I}^{(d)} = \dot{R}^{(d)} = \dot{V}^{(d)} = 0$.
\end{definition}
Similarly, we define the endemic equilibrium as follows:
\begin{definition}[Endemic Equilibrium Point]\label{def:eep}
    An endemic equilibrium point (EEP) is an equilibrium with the steady-state infected level $I^{(e)} > 0$,
    where $\dot{S}^{(e)} = \dot{I}^{(e)} = \dot{R}^{(e)} = \dot{V}^{(e)} = 0$.
\end{definition}
\begin{lemma}[Strictly Positive EEP]\label{lemma:positive_eep}
    If there exists an endemic equilibrium $(S^{(e)}, I^{(e)}, R^{(e)}, V^{(e)})$, then it is strictly greater than $0$, i.e. $z^{(e)} > 0$ $\forall z \in \{S, I, R, V\}$.
    Furthermore, $V^{(e)} = \underline{\kappa}$.
\end{lemma}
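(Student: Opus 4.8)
The plan is to read off each coordinate of the endemic equilibrium directly from the steady-state equations $\dot{S}^{(e)} = \dot{I}^{(e)} = \dot{R}^{(e)} = \dot{V}^{(e)} = 0$, exploiting the defining hypothesis $I^{(e)} > 0$ at every stage. First I would use \eqref{eq:sirsv_group_I}: setting $\dot{I} = 0$ gives $I^{(e)}(\beta S^{(e)} - \gamma) = 0$, and since $I^{(e)} > 0$ this immediately yields $S^{(e)} = \gamma/\beta$, which is strictly positive because $\beta, \gamma > 0$ by Assumption~\ref{asm:00}. This disposes of the susceptible coordinate with no further work.

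Next I would pin down $V^{(e)}$. By Lemma~\ref{lem:00} the equilibrium lies in the admissible set, so in particular $R^{(e)} \geq 0$, and hence $S^{(e)} + R^{(e)} \geq S^{(e)} > 0$. Setting $\dot{V} = 0$ in \eqref{eq:sirsv_group_V} gives $\rho\left(1 - \tfrac{V^{(e)}}{\kappa}\right)(S^{(e)} + R^{(e)}) = 0$; since $\rho > 0$ and the last factor is strictly positive, the only possibility is $1 - V^{(e)}/\kappa = 0$, i.e.\ $V^{(e)} = \kappa$. The subtle point, and the step I expect to be the main obstacle, is reconciling this with admissibility: Lemma~\ref{lem:kappa_range} forces $V^{(e)} \leq \underline{\kappa} = \min\{1, \kappa\} \leq \kappa$. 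Combined with $V^{(e)} = \kappa$, this squeezes $\kappa \leq \underline{\kappa} \leq \kappa$, so $\underline{\kappa} = \kappa$ and therefore $V^{(e)} = \underline{\kappa}$. In effect, an endemic equilibrium can only live inside $\mathcal{S}$ when $\kappa \leq 1$; this is exactly what makes the stated value $\underline{\kappa}$ (rather than the bare $\kappa$) the correct one, and it is worth flagging explicitly in the write-up.

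Finally I would recover $R^{(e)}$. Substituting $V^{(e)} = \kappa$ into \eqref{eq:sirsv_group_R} with $\dot{R} = 0$ collapses the vaccination term, leaving $\gamma I^{(e)} = \omega R^{(e)}$, so $R^{(e)} = \gamma I^{(e)}/\omega > 0$ since $\gamma, \omega > 0$ and $I^{(e)} > 0$. At this point all four coordinates are accounted for: $I^{(e)} > 0$ by hypothesis, $S^{(e)} = \gamma/\beta > 0$, $V^{(e)} = \kappa > 0$, and $R^{(e)} = \gamma I^{(e)}/\omega > 0$, giving $z^{(e)} > 0$ for every $z \in \{S, I, R, V\}$ together with $V^{(e)} = \underline{\kappa}$. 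Note that \eqref{eq:sirsv_group_S} is not needed for the positivity argument; it would serve only as a consistency check, pinning down the remaining relation among the equilibrium values.
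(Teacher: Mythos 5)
Your proof is correct and rests on the same key fact as the paper's: once $S^{(e)}+R^{(e)}>0$, equation \eqref{eq:sirsv_group_V} forces $V^{(e)}=\kappa$. The organization differs in a useful way, though. The paper first establishes $S^{(e)}>0$ and $R^{(e)}>0$ by separate contradiction arguments (substituting a zero value into the dynamics and observing a nonzero derivative), and only then invokes \eqref{eq:sirsv_group_V}; the published contradiction step for $S^{(e)}=0$ is in fact garbled as written (it asserts $\dot S^{(e)}=-\gamma I$, which is really the $\dot I$ equation evaluated at $S=0$). You instead solve \eqref{eq:sirsv_group_I} directly to get $S^{(e)}=\gamma/\beta>0$, need only the weaker admissibility fact $R^{(e)}\geq 0$ to conclude $V^{(e)}=\kappa$, and then recover $R^{(e)}=\gamma I^{(e)}/\omega>0$ from \eqref{eq:sirsv_group_R} afterward. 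This is logically more economical and anticipates the explicit formulas of Proposition~\ref{prop:eep}. You also make explicit the reconciliation of $V^{(e)}=\kappa$ with the bound $V\leq\underline{\kappa}$ from Lemma~\ref{lem:kappa_range}, squeezing $\underline{\kappa}=\kappa$ and hence $\kappa\leq 1$; the paper reaches the analogous conclusion ($\kappa<1$, strictly, since $\kappa=1$ would force $S^{(e)}=0$) by remarking that \eqref{eq:eep_sirs_network_v} is otherwise unsatisfiable. Both arguments are sound; yours is the cleaner write-up of the same underlying computation.
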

\begin{proof}
    Assume, by way of contradiction,
    that $S^{(e)} = 0$,
    then $\dot{S}^{(e)} = -\gamma I \neq 0$, which contradicts the definition of EEP. Therefore, $S^{(e)} > 0$.
    A similar argument can be made by substituting $R^{(e)} = 0$ into \eqref{eq:sirsv_group_R} to get $R^{(e)} > 0$.
    Therefore, from \eqref{eq:sirsv_group_V},
    we have
    \begin{equation}\label{eq:eep_sirs_network_v}
        0 = \rho \left(1 - \frac{V^{(e)}}{\kappa}\right) (S^{(e)} + R^{(e)})
    \end{equation}
    and the only way for the R.H.S. to equal zero is
    if $V^{(e)} = \kappa$. Notice that \eqref{eq:eep_sirs_network_v} cannot be satisfied when $\kappa \geq 1$, and therefore, EEP only exists when $\kappa < 1$,
    concluding the proof.
\end{proof}
Lastly, we reiterate the definition of the basic reproduction number through the next generation matrix method
\cite{diekmann1990definition, van2002reproduction}:
\begin{definition}[Basic and Effective Reproduction Number]
    Let $\dot{z_i} = f_i(z) - g_i(z)$ be the differential equation of the $i^{th}$ infected compartment, where $f_i(z)$ is the function governing the rate of appearance, $g_i(z)$ is the function governing the rate of transferring into other compartments, and $i \in \{1, \hdots, m\}$ for $m$ infected compartments out of $n$ total compartments. Let $F=\left[\frac{\partial f_i(z_0)}{\partial z_j}\right]$ and $G=\left[\frac{\partial g_i(z_0)}{\partial z_j}\right]$ be the Jacobians of $[f_i]$ and $[g_i]$, $\forall j \in \{1, \hdots, m\}$, then $FV^{-1}$ is the next generation matrix, and its spectral radius $\sigma(FV^{-1})$ is the basic reproduction number $\mathcal{R}_0$ of the system. $z_0$ is the initial condition where full susceptibility of population is assumed ($S(t_0) \approx 1$).
    The effective reproduction number is defined as $\mathcal{R}_t :=S(t)\mathcal{R}_0$.
\end{definition}
The $ij^{th}$ entry of $FV^{-1}$ is the expected number of secondary cases in the $i^{th}$ compartment produced by an infected individual in the $j^{th}$ compartment \cite{van2017reproduction}. In the case of a single infected compartment, it is simply the average number of secondary infected cases introduced by one infected individual.

\begin{proposition}
    The basic reproduction number $\mathcal{R}_0$ of \eqref{eq:sirsv_group} is $\frac{\beta}{\gamma}$, and the effective reproduction number $\mathcal{R}_t$ is upper bounded by $(1 - V(t))\frac{\beta}{\gamma}$.
\end{proposition}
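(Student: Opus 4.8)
The plan is to treat the claim as two independent computations: first extract $\mathcal{R}_0$ from the next-generation construction applied to the single infected compartment, and then bound $\mathcal{R}_t$ using the conservation law established in Lemma~\ref{lem:00}.

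First I would observe that the only infected compartment is $I$, so $m=1$ and every object in the next-generation method collapses to a scalar. Following the decomposition $\dot{z}_i = f_i(z) - g_i(z)$ applied to \eqref{eq:sirsv_group_I}, the rate of appearance of new infections is $f_1 = \beta S I$ and the transfer (recovery) flux is $g_1 = \gamma I$. Differentiating each with respect to $I$ and evaluating at the disease-free initial condition $z_0$ with full susceptibility $S \approx 1$ gives the scalar Jacobians $F = \beta S|_{S=1} = \beta$ and $G = \gamma$ (the transfer Jacobian $G$ in the notation of the preceding definition). Hence the next-generation quantity is $FG^{-1} = \beta/\gamma$, and since it is a $1\times 1$ object its spectral radius equals itself, yielding $\mathcal{R}_0 = \beta/\gamma$.

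For the effective reproduction number I would start directly from its definition $\mathcal{R}_t := S(t)\mathcal{R}_0 = S(t)\frac{\beta}{\gamma}$. Lemma~\ref{lem:00} guarantees $S(t)+I(t)+R(t)+V(t)=1$ with all four states nonnegative for every $t \geq t_0$; rearranging gives $S(t) = 1 - I(t) - R(t) - V(t) \leq 1 - V(t)$, where the inequality uses $I(t), R(t) \geq 0$. Multiplying through by $\frac{\beta}{\gamma} > 0$ then produces the claimed bound $\mathcal{R}_t \leq (1 - V(t))\frac{\beta}{\gamma}$.

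I do not anticipate a genuine obstacle here, since both parts reduce to elementary manipulations once the single-compartment structure is recognized. The only points requiring care are the correct split of $\dot{I}$ into an appearance term versus a transfer term (so that $F$ captures $\beta S I$ and not the recovery flux $\gamma I$) and the evaluation of the Jacobian at the fully-susceptible linearization point $S \approx 1$; the bound on $\mathcal{R}_t$ is then an immediate consequence of the nonnegativity and the conservation identity already proved in Lemma~\ref{lem:00}.
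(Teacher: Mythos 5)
Your proposal is correct and follows essentially the same route as the paper: the next-generation decomposition $f=\beta SI$, $g=\gamma I$ with scalar Jacobians evaluated at $S(t_0)\approx 1$, followed by the conservation identity to bound $\mathcal{R}_t$. If anything, your derivation of the bound is slightly more careful than the paper's, which writes $S(t)=(1-V(t)-R(t))$ as an equality (omitting $I(t)$) before concluding the inequality, whereas you correctly use $S(t)=1-I(t)-R(t)-V(t)\leq 1-V(t)$ via nonnegativity of $I$ and $R$.
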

\begin{proof}
    From \eqref{eq:sirsv_group_I}, $f(I) = \beta SI$, $
        g
        (I) = \gamma I$. By computing the Jacobians, we get $F =\frac{\partial\beta SI(t_0)}{\partial I} = \beta
        S(t_0)
    $ and $G =\frac{\partial\gamma I}{\partial I}= \gamma$. Therefore, $\mathcal{R}_0 = \sigma(FG^{-1}) = \frac{\beta}{\gamma}$ since $\mathcal{R}_0$ is defined for $S(t_0) \approx 1$. Further, $\mathcal{R}_t = S(t)\frac{\beta}{\gamma} = (1 - V(t) -R(t))\frac{\beta}{\gamma} \leq (1 - V(t))\frac{\beta}{\gamma}$.
\end{proof}


Notice that we use the word uniqueness in the following text not to refer to the uniqueness of the set of all equilibria of \eqref{eq:sirsv_group} but
of a particular class of equilibria, i.e. the DSF and EEP are sets of cardinality at most one.
As we will see
in the
following,
the EEP could be an empty set under certain conditions.

\begin{proposition}[Existence and uniqueness of DFE] \label{prop:dfe}
    There always exists a unique disease-free equilibrium $(S^{(d)}, I^{(d)}, R^{(d)}, V^{(d)})$ = $(1 - \underline{\kappa}, 0, 0, \underline{\kappa})$.
\end{proposition}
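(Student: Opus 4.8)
The plan is to impose the defining conditions of a DFE on the system~\eqref{eq:sirsv_group} --- namely $I^{(d)} = 0$ together with $\dot{S}^{(d)} = \dot{I}^{(d)} = \dot{R}^{(d)} = \dot{V}^{(d)} = 0$ --- and then to solve the resulting algebraic system, using the state bounds from Lemma~\ref{lem:kappa_range} and the conservation identity from Lemma~\ref{lem:00} to pin down a single admissible point. First I would substitute $I^{(d)} = 0$ into~\eqref{eq:sirsv_group_I}, which makes $\dot{I} \equiv 0$ automatically, so that all the remaining content lives in the other three equations.

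Next I would exploit~\eqref{eq:sirsv_group_R} at equilibrium: with $I^{(d)} = 0$ it reads $0 = -R^{(d)}\bigl[\omega + \rho(1 - V^{(d)}/\kappa)\bigr]$. Because all parameters are strictly positive and Lemma~\ref{lem:kappa_range} guarantees $V^{(d)} \le \underline{\kappa} \le \kappa$, the bracketed factor is bounded below by $\omega > 0$ and is therefore strictly positive, forcing $R^{(d)} = 0$. Substituting $R^{(d)} = 0$ into the equilibrium form of~\eqref{eq:sirsv_group_S} leaves $0 = -\rho(1 - V^{(d)}/\kappa)\, S^{(d)}$, so that either $S^{(d)} = 0$ or $V^{(d)} = \kappa$; equation~\eqref{eq:sirsv_group_V} yields exactly the same dichotomy, so it is redundant and no information is lost.

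The decisive step is to resolve this dichotomy using conservation. From Lemma~\ref{lem:00} with $I^{(d)} = R^{(d)} = 0$ we have $S^{(d)} + V^{(d)} = 1$, i.e.\ $S^{(d)} = 1 - V^{(d)}$. In the branch $V^{(d)} = \kappa$, the bound $V^{(d)} \le \underline{\kappa}$ forces $\kappa \le 1$, hence $\underline{\kappa} = \kappa$, and we obtain $(S^{(d)}, V^{(d)}) = (1 - \underline{\kappa}, \underline{\kappa})$. In the branch $S^{(d)} = 0$, conservation gives $V^{(d)} = 1$, which is admissible only if $\underline{\kappa} = 1$ (so $\kappa \ge 1$), again yielding $(S^{(d)}, V^{(d)}) = (1 - \underline{\kappa}, \underline{\kappa})$. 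Thus both branches collapse to the single candidate $(1 - \underline{\kappa}, 0, 0, \underline{\kappa})$, which settles uniqueness. Existence then follows by direct substitution back into~\eqref{eq:sirsv_group}: at this point at least one of the factors $(1 - V^{(d)}/\kappa)$ or $(S^{(d)} + R^{(d)})$ vanishes in every equation, so all four derivatives are zero.

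I expect the main obstacle to be the case analysis on the size of $\kappa$ relative to $1$. The two algebraic branches $S^{(d)} = 0$ and $V^{(d)} = \kappa$ superficially appear to produce distinct equilibria, and the real work is showing that the admissibility constraint $V \le \underline{\kappa}$ from Lemma~\ref{lem:kappa_range} eliminates the spurious possibilities and fuses the survivors into the single point expressed through $\underline{\kappa}$. Particular care is needed at the boundary $\kappa = 1$, where both branches are simultaneously active yet remain mutually consistent.
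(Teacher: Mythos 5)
Your proposal is correct and follows essentially the same route as the paper: impose the DFE conditions, factor the equilibrium forms of \eqref{eq:sirsv_group_R} and \eqref{eq:sirsv_group_S}, and use the admissibility bounds and the conservation identity to collapse the two branches $S^{(d)}=0$ and $V^{(d)}=\kappa$ into the single point $(1-\underline{\kappa},0,0,\underline{\kappa})$. The only difference is cosmetic: you dispose of the paper's first case $V^{(d)}=\kappa\left(1+\frac{\omega}{\rho}\right)$ at once by noting that the bracket $\omega+\rho\left(1-\frac{V^{(d)}}{\kappa}\right)$ is strictly positive under Lemma~\ref{lem:kappa_range}, whereas the paper rules that case out by deriving $S^{(d)}+R^{(d)}=0$ and contradicting the conservation law --- your shortcut is slightly cleaner but the substance is identical.
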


\begin{proof}
    By the definition of a disease-free equilibrium and from \eqref{eq:sirsv_group_I}, we know the $j^{th}$ DFE is characterized by $I^{(d_j)}=0$, and $\dot{S}^{(d_j)}$ $=$ $\dot{I}^{(d_j)}$ $=$ $\dot{R}^{(d_j)} $ $=$ $\dot{V}^{(d_j)}$ $=$ $0$ $\forall j \in \mathbb{R}$. Equation~\eqref{eq:sirsv_group_R} can be evaluated as
    $
        0 = \left(\omega + \rho\left(1 - \frac{V^{(d_j)}}{\kappa}\right)\right)R^{(d_j)}
    $
    by substitution, which implies either:
    \begin{numcases}{}
        V^{(d_1)} = \kappa\left(1 + \frac{\omega}{\rho}\right) & or \label{eq:grouped_dfe_case1}\\
        R^{(d_2)} = 0. \label{eq:grouped_dfe_case2}
    \end{numcases}
    In the first case \eqref{eq:grouped_dfe_case1}, we can substitute $V^{(d_1)}$ into \eqref{eq:sirsv_group_V} and solve for $S^{(d_1)} + R^{(d_1)} = 0$. By Assumption \ref{asm:00} and
    Lemma \ref{lem:00}, $S^{(d_1)} = R^{(d_1)} = 0$ is the only solution to the equation. Together with the presumptions $V^{(d_1)} = \kappa\left(1 + \frac{\omega}{\rho}\right)$ and $I^{(d_1)} = 0$, we realize that $S^{(d_1)} + I^{(d_1)} + R^{(d_1)} + V^{(d_1)} \neq 1$,
    which is not in the permissible domain of states by Assumption~\ref{asm:00} and Lemma~\ref{lem:00}. Therefore, DFE $(d_1)$ is not a feasible equilibrium. 

    Now, by \eqref{eq:grouped_dfe_case2}, we can substitute $R^{(d_2)} = 0$ into \eqref{eq:sirsv_group_S}, then we have $0 = \rho\left(1 - \frac{V^{(d_2)}}{\kappa}\right)S^{(d_2)}$, which leads to two subcases:
    \addtocounter{equation}{-1}
    \begin{subequations}
        \begin{numcases}{}
            S^{(d_{21})} = 0 & or \label{eq:grouped_dfe_case2a}\\
            V^{(d_{22})} = \kappa. \label{eq:grouped_dfe_case2b}
        \end{numcases}
    \end{subequations}
    %
    %
    In the first subcase \eqref{eq:grouped_dfe_case2a},
    since $I^{(d_{21})} = 0$ by the definition of the DFE and $R^{(d_{21})} = 0$ by \eqref{eq:grouped_dfe_case2},
    we
    have $(S^{(d_{21})}, I^{(d_{21})}, R^{(d_{21})}, V^{(d_{21})})$ $=$ $(0, 0, 0, 1)$, which is true when $\kappa \geq 1$
    by recalling Lemma \ref{lem:kappa_range} and that $\underline{\kappa} = \min\{1, \kappa\}$.
    The second subcase \eqref{eq:grouped_dfe_case2b}, combined with $I^{(d_{22})} = 0$, \eqref{eq:grouped_dfe_case2}, and
    Lemma \ref{lem:00},
    gives
    $(S^{(d_{22})}, I^{(d_{22})}, R^{(d_{22})}, V^{(d_{22})}) = (1 - \kappa, 0, 0, \kappa)$ which is true when $\kappa \in (0, 1]$.
    DFE $(d_{21})$ and $(d_{22})$ can be combined as $(1 - \underline{\kappa}, 0, 0, \underline{\kappa})$ to be applicable to the whole range of $\kappa \in (0, \infty)$.
    Since $(1 - \underline{\kappa}, 0, 0, \underline{\kappa})$ is the only DFE satisfying Definition~\ref{def:DFE} and \eqref{eq:sirsv_group}, it is unique under Assumption~\ref{asm:00}. Since $(1 - \underline{\kappa}, 0, 0, \underline{\kappa})$ is a valid state under Assumption~\ref{asm:00} for all admissible values of $\kappa$,
    the DFE always exists.
\end{proof}

\begin{remark}
    Note that $\kappa$ ceases to act as an upper bound on
    the vaccination
    state
    while still affecting the effective rate of vaccination $\rho \left(1 - \frac{V}{\kappa}\right)$ when $\kappa$ is greater than $1$. Moreover, $\rho \left(1 - \frac{V}{\kappa}\right) \to \rho$ as $\kappa \to \infty$.
\end{remark}

\begin{proposition}[Uniqueness of EEP] \label{prop:eep}
    The endemic equilibrium point of \eqref{eq:sirsv_group}:
    \small
    \begin{equation*}
        \left(S^{(e)}, I^{(e)}, R^{(e)}, V^{(e)}\right) = \left(\frac{\gamma}{\beta}, \frac{1 - \kappa - \gamma / \beta}{1 + \gamma/\omega}, \frac{1 - \kappa - \gamma / \beta}{1 + \omega/\gamma}, {\kappa}\right)
    \end{equation*}

    \normalsize

    \vspace{-1.5ex}

    \noindent
    is unique.
    Moreover,
    $$\frac{I^{(e)}}{R^{(e)}} = \frac{\omega}{\gamma}.$$
\end{proposition}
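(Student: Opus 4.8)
The plan is to treat this purely as a uniqueness claim: assuming an EEP exists, I will show that the equilibrium equations together with the population-conservation identity force every coordinate to a single value. The key enabler is Lemma~\ref{lemma:positive_eep}, which hands me two facts for free---namely that every coordinate is strictly positive and that $V^{(e)} = \underline{\kappa} = \kappa$. The last equality holds because Lemma~\ref{lemma:positive_eep} also establishes that an EEP can exist only when $\kappa < 1$, so $\underline{\kappa} = \min\{1,\kappa\} = \kappa$.

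First I would read $S^{(e)}$ off the infected equation. Setting $\dot{I}^{(e)} = 0$ in \eqref{eq:sirsv_group_I} gives $I^{(e)}\left(\beta S^{(e)} - \gamma\right) = 0$, and since $I^{(e)} > 0$ by Lemma~\ref{lemma:positive_eep}, the only admissible root is $S^{(e)} = \gamma/\beta$. Next I would exploit the decoupling produced by $V^{(e)} = \kappa$: this makes the factor $\left(1 - V^{(e)}/\kappa\right)$ vanish, so every vaccination term in \eqref{eq:sirsv_group_S} and \eqref{eq:sirsv_group_R} drops out. Consequently $\dot{R}^{(e)} = 0$ collapses to $\gamma I^{(e)} = \omega R^{(e)}$, which is exactly the asserted ratio $I^{(e)}/R^{(e)} = \omega/\gamma$.

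With $S^{(e)}$, $V^{(e)}$, and this ratio in hand, I would close the system using the conservation identity $S^{(e)} + I^{(e)} + R^{(e)} + V^{(e)} = 1$ from Lemma~\ref{lem:00}. Substituting $S^{(e)} = \gamma/\beta$ and $V^{(e)} = \kappa$ yields $I^{(e)} + R^{(e)} = 1 - \kappa - \gamma/\beta$, and eliminating one variable through $\gamma I^{(e)} = \omega R^{(e)}$ produces the stated closed forms for $I^{(e)}$ and $R^{(e)}$. Since each step pins down a unique value, uniqueness follows immediately. As a consistency check I would confirm that $\dot{S}^{(e)} = 0$ holds automatically: with the vaccination term gone, $\dot{S}^{(e)} = -\gamma I^{(e)} + \omega R^{(e)}$, which is zero by the ratio relation.

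I do not expect a serious obstacle here, precisely because Lemma~\ref{lemma:positive_eep} removes the two facts that would otherwise demand the most effort. The one point needing care is recognizing that $V^{(e)} = \kappa$ is what decouples the dynamics, and that the remaining equilibrium condition $\dot{S}^{(e)} = 0$ is redundant given conservation of population---so no fourth independent constraint appears and the four equations are not over-determined.
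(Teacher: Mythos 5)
Your proposal is correct and follows essentially the same route as the paper's proof: read $S^{(e)}=\gamma/\beta$ from $\dot I=0$ using $I^{(e)}>0$, invoke Lemma~\ref{lemma:positive_eep} for $V^{(e)}=\kappa$, derive the ratio $I^{(e)}/R^{(e)}=\omega/\gamma$ from \eqref{eq:sirsv_group_R}, and close the system with the conservation identity. The added consistency check that $\dot S^{(e)}=0$ is automatically satisfied is a nice touch but does not change the argument.
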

\begin{proof}
    If an endemic equilibrium exists, by Definition \ref{def:eep},
    $I^{(e)}>0$. Applying this fact
    and setting
    \eqref{eq:sirsv_group_I}
    to zero gives
    $S^{(e)}=\frac{\gamma}{\beta}$.
    Therefore, the result from
    Lemma \ref{lemma:positive_eep},
    with $n=1$, implies
    $V^{(e)}=\kappa$. By substituting $V^{(e)}=\kappa$ into \eqref{eq:sirsv_group_R}, we will have $ \frac{I^{(e)}}{R^{(e)}} = \frac{\omega}{\gamma}$.
    To solve for $I^{(e)}$ and  $R^{(e)}$, recall the global preservation of population result in Lemma~\ref{lem:00}, we have $I^{(e)} + R^{(e)} = 1 - \kappa - \frac{\gamma}{\beta}$. We can combine this equation with $\frac{I^{(e)}}{R^{(e)}} = \frac{\omega}{\gamma}$, then solving for the system of two equations with two unknowns gives us the desired result. Therefore, we have shown the uniqueness of permissible EEP under Assumption~\ref{asm:00}.
\end{proof}
Notice that since $I^*$ can either be $0$ or $(0, 1]$, the DFE and the EEP are the only two possible
admissible equilibria of
\eqref{eq:sirsv_group} under
Assumption~\ref{asm:00}.

\begin{lemma}[Necessary and sufficient condition for existence of EEP]\label{lem:nns_eep_exists}
    The endemic equilibrium exists in \eqref{eq:sirsv_group} if and only if $\kappa < 1 - \frac{\gamma}{\beta}$, which is equivalent to $(1 - \kappa)\frac{\beta}{\gamma} > 1$.
\end{lemma}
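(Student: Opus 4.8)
The plan is to reduce the existence question to the positivity of a single coordinate of the closed-form EEP furnished by Proposition~\ref{prop:eep}. That proposition already shows that any endemic equilibrium must coincide with the point $\left(\frac{\gamma}{\beta},\,\frac{1-\kappa-\gamma/\beta}{1+\gamma/\omega},\,\frac{1-\kappa-\gamma/\beta}{1+\omega/\gamma},\,\kappa\right)$, so ``existence'' means precisely that this candidate is an admissible state in $\mathcal{S}$ with $I^{(e)}>0$. I would therefore prove the two implications by inspecting this formula rather than re-deriving any equilibrium condition.

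For necessity (the ``only if'' direction), suppose an EEP exists. By Definition~\ref{def:eep} we have $I^{(e)}>0$, and by Proposition~\ref{prop:eep} the infected coordinate equals $I^{(e)}=\frac{1-\kappa-\gamma/\beta}{1+\gamma/\omega}$. Since all parameters are strictly positive by Assumption~\ref{asm:00}, the denominator $1+\gamma/\omega$ is strictly positive, so $I^{(e)}>0$ holds if and only if the numerator is positive, i.e.\ $1-\kappa-\frac{\gamma}{\beta}>0$, which is exactly $\kappa<1-\frac{\gamma}{\beta}$.

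For sufficiency (the ``if'' direction), assume $\kappa<1-\frac{\gamma}{\beta}$ and check that the candidate point lies in $\mathcal{S}$ with $I^{(e)}>0$. Positivity of $I^{(e)}$ and $R^{(e)}$ follows from the same numerator being positive; $S^{(e)}=\frac{\gamma}{\beta}$ satisfies $0<\frac{\gamma}{\beta}<1-\kappa<1$ because $\kappa>0$; the conservation identity $S^{(e)}+I^{(e)}+R^{(e)}+V^{(e)}=\frac{\gamma}{\beta}+(1-\kappa-\frac{\gamma}{\beta})+\kappa=1$ from Lemma~\ref{lem:00} is met; and crucially $V^{(e)}=\kappa$ is admissible because $\kappa<1-\frac{\gamma}{\beta}<1$ forces $\underline{\kappa}=\min\{1,\kappa\}=\kappa$, so $V^{(e)}=\underline{\kappa}$ as required by Lemma~\ref{lemma:positive_eep}. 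Hence the candidate is a genuine EEP. Finally, the stated equivalence is pure algebra: adding $\frac{\gamma}{\beta}$ to both sides of $\kappa<1-\frac{\gamma}{\beta}$ and multiplying by $\frac{\beta}{\gamma}>0$ yields $(1-\kappa)\frac{\beta}{\gamma}>1$.

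The main obstacle, such as it is, is the sufficiency bookkeeping: one must confirm every coordinate of the candidate lies in its permissible interval, and in particular that the bound $\kappa<1$ (needed so that $V^{(e)}=\kappa$ coincides with $\underline{\kappa}$ and does not violate Lemma~\ref{lem:kappa_range}) is automatically implied by $\kappa<1-\frac{\gamma}{\beta}$ together with $\gamma,\beta>0$. Once that chain of inequalities is laid out, the equivalence is immediate.
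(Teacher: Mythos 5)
Your proposal is correct and follows essentially the same route as the paper: both reduce existence of the EEP to admissibility of the closed-form candidate from Proposition~\ref{prop:eep} and observe that all the resulting coordinate constraints collapse to the single inequality $1-\kappa-\frac{\gamma}{\beta}>0$. Your version is slightly more explicit about the bookkeeping (the conservation identity and the fact that $\kappa<1-\frac{\gamma}{\beta}$ already forces $\kappa<1$, so $\underline{\kappa}=\kappa$), which the paper compresses into the remark that the first and fourth constraints are ``absorbed'' by the second.
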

\begin{proof}
    From Assumption~\ref{asm:00} and Lemma~\ref{lem:00},
    the EEP in Proposition~\ref{prop:eep} exists if and only if
    the range of parameters satisfy the four inequality:
    \begin{equation}\label{eq:eep_param_range}
        \begin{cases}
            0 < \frac{\gamma}{\beta}< 1                                     \\
            0 < \frac{1 - \kappa - \gamma/\beta}{1 + \gamma / \omega} < 1   \\
            0 < \frac{1 - \kappa - \gamma / \beta}{1 + \omega / \gamma} < 1 \\
            0 < \kappa < 1.
        \end{cases}
    \end{equation}
    The first and forth constraints are absorbed by Assumption \ref{asm:00} and the second constraint. The second and third constraints can be reduced to $0 < 1 - \kappa - \gamma / \beta$, which is equivalent to $\kappa < 1 - \gamma/\beta$ and $(1 - \kappa)\frac{\beta}{\gamma} > 1$.
\end{proof}

Fig.~\ref{fig:state_space_with_EEP} illustrates the set of states $\mathcal{S}$, which is a 3-dimensional simplex with six vertices $(S, I, R, D, F, G)$ with coordinates $(1, 0, 0 ,0)$, $(0, 1, 0 ,0)$, $(0, 0, 1 ,0)$, $(1-\kappa, 1, 0 ,\kappa)$, $(0, 1-\kappa, 0 ,\kappa)$, and $(0, 0, 1-\kappa,\kappa)$, respectively.
The larger simplex is a tetrahedron because of Lemma~\ref{lem:00}, which states that $S(t) + I(t) + R(t) + V(t) = 1$ and $(S(t), I(t), R(t), V(t))\in[0,1]^4$. To see this clearly, consider an SIR compartmental model with three states, $(S, I, R)$. If $S + I + R = 1$, then $(S, I, R)$ lie only on the surface of the unit circle defined by the 1-norm. Furthermore, if $(S, I, R) \in [0, 1]^3$, then $(S, I, R)$ lies only on the surface belonging to the first quadrant of the 3-dimensional space, where $S$, $I$, $R$ are non-negative. Notice that the surface is an equilateral triangle with its three vertices $S = (1, 0, 0)$, $I = (0, 1, 0)$, and $R = (0, 0, 1)$. Generalizing this notion to the 4-dimensional case gives us Fig.~\ref{fig:state_space_with_EEP}.


\begin{theorem}[DFE GAS Conditions]\label{thm:DFEGAS_condtion}
    The DFE of
    \eqref{eq:sirsv_group} is globally asymptotically stable (GAS)
    if and only if $(1 - \kappa)\frac{\beta}{\gamma} \leq 1$.
\end{theorem}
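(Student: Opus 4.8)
The plan is to prove the two implications separately, exploiting the observation that the threshold $(1-\kappa)\frac{\beta}{\gamma}\le 1$ is exactly the complement of the existence condition for the EEP established in Lemma~\ref{lem:nns_eep_exists}.

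For necessity I would argue by contraposition. Suppose $(1-\kappa)\frac{\beta}{\gamma} > 1$. Then Lemma~\ref{lem:nns_eep_exists} together with Proposition~\ref{prop:eep} guarantees that an EEP exists in the admissible set with $I^{(e)}>0$. Since any trajectory initialized at this equilibrium remains there for all time and therefore never converges to the DFE, global asymptotic stability of the DFE is impossible. Hence DFE GAS forces $(1-\kappa)\frac{\beta}{\gamma}\le 1$, which is the necessity direction.

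For sufficiency, the key structural fact I would use is the monotonicity of $V$: by \eqref{eq:sirsv_group_V} and Lemma~\ref{lem:kappa_range}, $\dot V = \rho(1-V/\kappa)(S+R)\ge 0$, so $V$ is nondecreasing and bounded above by $\underline{\kappa}$, whence $V(t)\to V_\infty$ for some $V_\infty\le\underline{\kappa}$. Because the admissible set is compact and forward invariant (Lemmas~\ref{lem:kappa_range} and~\ref{lem:00}), every $\omega$-limit set is nonempty, compact and invariant, and on it $V\equiv V_\infty$, so $\dot V\equiv 0$ there. I would then split the condition $\dot V = 0$ into the cases $S+R\equiv 0$ and $V_\infty=\kappa$ (the latter feasible only when $\kappa\le 1$, since $V_\infty\le\underline{\kappa}$). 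In the first case, invariance of $S\equiv R\equiv 0$ together with \eqref{eq:sirsv_group_R} forces $\gamma I\equiv 0$, hence $I\equiv 0$ and the limit point $(0,0,0,1)$, which is precisely the DFE of Proposition~\ref{prop:dfe} when $\underline{\kappa}=1$. In the second case the effective vaccination rate $\rho(1-V/\kappa)$ vanishes on the limit set, so the reduced flow is a closed SIRS system on the face $S+I+R=1-\kappa$; there $S\le 1-\kappa$, and the hypothesis yields $\beta S-\gamma\le\beta(1-\kappa)-\gamma\le 0$, so that $\dot I=(\beta S-\gamma)I\le 0$. Invariance then forces $\dot I\equiv 0$, and reasoning as in Proposition~\ref{prop:dfe} gives $I\equiv R\equiv 0$, $S=1-\kappa$. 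In both cases the $\omega$-limit set is the single point $(1-\underline{\kappa},0,0,\underline{\kappa})$, which establishes global attractivity of the DFE.

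Finally I would upgrade global attractivity to full GAS by supplying Lyapunov stability of the DFE. When the inequality is strict, this is immediate from linearization: a direct computation of the Jacobian shows the infected direction contributes the eigenvalue $\beta(1-\underline{\kappa})-\gamma<0$, and the remaining subsystem is Hurwitz, so the DFE is locally asymptotically stable, which combined with global attractivity yields GAS. I expect the borderline case $(1-\kappa)\frac{\beta}{\gamma}=1$ to be the main obstacle: there the EEP has merged into the DFE, the infected eigenvalue is exactly $0$, and linearization is inconclusive. I would treat this degenerate (transcritical) case directly, using the monotone convergence $V\to\kappa$ to show that $S\le 1-V$ is ultimately pinned below $\gamma/\beta$, so that $I$ cannot escape a prescribed neighborhood of $0$, thereby recovering Lyapunov stability without relying on the linearization.
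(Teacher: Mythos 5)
Your proposal is correct, but it takes a genuinely different route from the paper. The paper's necessity direction is the same as yours (existence of an EEP under $(1-\kappa)\frac{\beta}{\gamma}>1$ precludes GAS of the DFE), but its sufficiency direction is a case-by-case ``flow direction'' argument on the simplex $\mathcal{S}$: it shows $\dot R>0$ on the edge $S+R=0$ with $I>0$, then $\dot V>0$ whenever $S+R>0$ and $V<\kappa$, concludes informally that $V\to\kappa$, and then drives $I$ and $R$ to zero on the face $V=\kappa$. Your LaSalle-type argument --- monotone bounded $V$ converges to some $V_\infty\le\underline{\kappa}$, the $\omega$-limit set is compact and invariant with $V\equiv V_\infty$, and the dichotomy $S+R\equiv 0$ versus $V_\infty=\kappa$ is then resolved by invariance --- buys two real improvements. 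First, it repairs the paper's weakest step: $\dot V>0$ alone does not imply $V\to\kappa$ (a priori $S+R$ could decay too fast), whereas your limit-set analysis correctly shows that $V_\infty<\kappa$ forces the limit point $(0,0,0,1)$, which is only admissible when $\kappa\ge 1$ and is then itself the DFE. Second, you explicitly separate global attractivity from Lyapunov stability and supply the latter via linearization (with a dedicated treatment of the critical case $(1-\kappa)\frac{\beta}{\gamma}=1$); the paper proves only attractivity and never addresses stability at all, so strictly speaking its proof of ``GAS'' is incomplete on this point. The one place your write-up is still a sketch is exactly that borderline case, where the plan (pin $S$ below $\gamma/\beta$ via $V\to\kappa$ to confine $I$) is plausible but would need to be carried out; everything else is standard and sound.
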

\begin{proof}
    Our goal is to show by exhaustion of cases that all the states in $\mathcal{S}$
    are
    either the DFE or will approach the DFE as $t \to \infty$ under the condition $(1 - \kappa)\frac{\beta}{\gamma} \leq 1$. Fig.~\ref{fig:state_space_with_EEP} visualizes the simplex of states, i.e. the range of admissible states $\mathcal{S}$, of which we want to show GAS around the DFE.

    Suppose $(1 - \kappa)\frac{\beta}{\gamma}\leq 1$. 
    If $S(t)+R(t) = 0$, which is equivalent to $S(t)=R(t)=0$, then either $(S(t)=R(t)=I(t)=0 \wedge V(t)=1)$, which is a special case of DFE when $\kappa \geq 1$, or $(S(t)=R(t)=0 \wedge I(t)>0)$. If $(S(t)=R(t)=0 \wedge I(t)>0)$, then by \eqref{eq:sirsv_group_R}:
    \begin{align*}
        \dot{R}(t) & =
        \gamma I(t) > 0.
    \end{align*}
    %
    %
    Hence, every admissible state in \eqref{eq:sirsv_group} which satisfies $S(t)+R(t) = 0$ will have $\dot{R}(t) > 0$, hence evolving to another state such that $S(t)+R(t) \neq 0$. In other words, states on the $IV$ edge in Fig.~\ref{fig:state_space_with_EEP} are not stable and will move toward vertex $R$ as time unfolds. The only state on $IV$ which violates this tendency of evolving towards $R$ is the vertex $V$, which is already the DFE if it lies within $\mathcal{S}$.
    For every state that satisfies $(S(t) + R(t) > 0 \wedge V(t) < \kappa)$,
    \begin{align*}
        \dot{V}(t) & = \rho\left(1 - \frac{V(t)}{\kappa}\right)(S(t)+R(t)) \\
                   & > 0
    \end{align*}
    %
    %
    by substituting $S(t) + R(t) > 0$ into \eqref{eq:sirsv_group_V}.
    Therefore, every state with $V(t) < \kappa$ will satisfy $V(t) = \kappa$ as $t \to \infty$ or it is already the DFE.

    If $(I(t) = R(t) = 0 \wedge V(t) = \kappa)$, then $S(t) = 1 - \kappa$ by Assumption~\ref{asm:00} and Lemma~\ref{lem:00}, which state that $S(t) + I(t) + R(t) + V(t) = 1$ $\forall t \geq t_0$,
    which is the DFE.
    Otherwise, $(V(t) = \kappa \wedge I(t)
        + R(t)
        > 0)$. Thus, by substituting $V(t) = \kappa$ into $S(t)+I(t)+R(t)+V(t)=1$, we have
    \begin{align}
        S(t) & = 1 - \kappa - I(t) - R(t)\nonumber                      \\
             & \leq \frac{\gamma}{\beta} - I(t) - R(t) \label{eq:ineq1} \\
             & < \frac{\gamma}{\beta}, \label{eq:ineq2}
    \end{align}
    %
    %
    where \eqref{eq:ineq1} holds because the condition $(1 - \kappa)\frac{\beta}{\gamma}\leq 1$ is equivalent to $\frac{\gamma}{\beta} \geq 1 - \kappa$ and \eqref{eq:ineq2} holds because $I(t)
        + R(t)
        > 0$.
    Therefore,
    from \eqref{eq:sirsv_group_I} and \eqref{eq:ineq2}, we have
    \begin{align*}
        \ \ \ \dot{I}(t) & = \beta \frac{\gamma}{\beta}I(t) - \gamma I(t) \\
                         & = -2\gamma I(t)                                \\
                         & \leq 0,
    \end{align*}
    %
    %
    where equality holds only if $I(t)>0$.
    %
    %
    Therefore, every state which satisfies $V(t) = \kappa$ will satisfy $(V(t) = \kappa \wedge I(t) = 0)$ as $t\to \infty$ or it is already the DFE.

    Lastly, for any state that satisfies $(V(t) = \kappa \wedge I(t) = 0 \wedge R(t) > 0)$, from \eqref{eq:sirsv_group_R}, we have
    \begin{align*}
        \dot{R}(t)
        = - \omega R(t)
        < 0.
    \end{align*}
    %
    %
    Thus, every state that satisfies $(V(t) = \kappa \wedge I(t) = 0 \wedge R(t) > 0)$ will satisfy $(V(t) = \kappa \wedge I(t) = 0 \wedge R(t) = 0)$
    as $t \to \infty$ or it is already the DFE. Therefore, the DFE is GAS if $(1 - \kappa)\frac{\beta}{\gamma} \leq 1$.

    If $(1 - \kappa)\frac{\beta}{\gamma} > 1$, then there exists
    an EEP in $\mathcal{S}$ by Lemma~\ref{lem:nns_eep_exists}, which concludes the
    other direction of the
    proof.
\end{proof}

\begin{figure}
    \centering
    \includegraphics[width=\columnwidth]{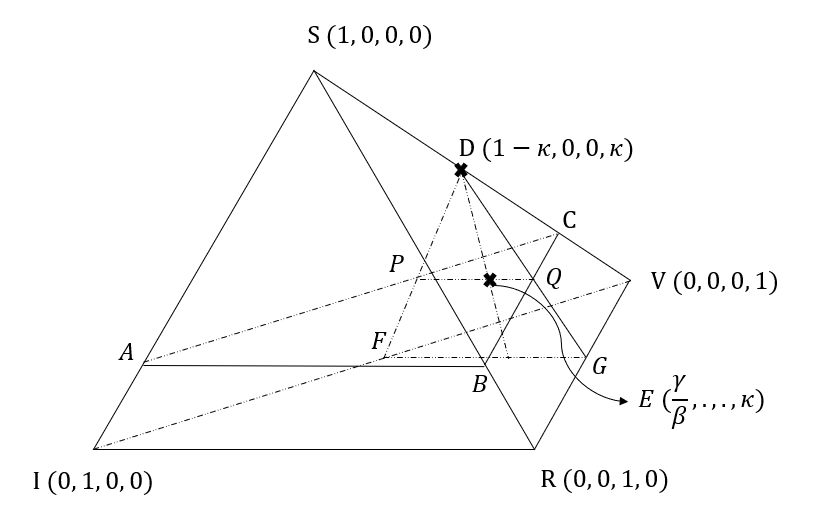}
    \caption{Admissible range of state values of \eqref{eq:sirsv_group} when EEP exists. $D$ is the DFE, and $E$ is the EEP. The
        plane $ABC$ represents all states with $S(t)=\frac{\gamma}{\beta}$, and the
        plane $DFG$ represents all states with $V(t)=\kappa$. The ratio  between the segments $\frac{\|EQ\|}{\|PE\|} = \frac{\omega}{\gamma}$.}
    \label{fig:state_space_with_EEP}
\end{figure}

Referring to the points in Fig.~\ref{fig:state_space_with_EEP}, $(1 - \kappa)\frac{\beta}{\gamma} \leq 1$ is achieved when the
plane
$ABD$ slides up or $DFG$ slides back, such that the segment $PQ$ vanishes. One can show that all states above the $ABC$
plane have $\dot{I} > 1$ and below $\dot{I} < 1$. An intuitive way to interpret Theorem~\ref{thm:DFEGAS_condtion} is that when $ABC$ is strictly above $DFG$, all states are pushed away from $I$ and pulled toward $V$, until they reach the DFE.

\begin{remark}
    Notice that the necessary and sufficient GAS condition of the disease-free equilibrium is not $\mathcal{R}_0 < 1$ for
    the SIRS-V$_\kappa$
    model, because the basic reproduction number only guarantees sufficient local asymptotic stability when it is less than one. Therefore, it might at times over/under-estimate the threshold condition of the spread process.
\end{remark}

\section{Simulations}
In this section, we investigate the impact of the vaccine confidence $\kappa$ on the behavior of the SIRS-V$_{\kappa}$ model. Fig.~\ref{fig:kappa_v} compares the trajectories of the vaccination level $V(t)$ with different values of $\kappa$, ranging from $0.1$ to $\infty$.
Note that, consistent with the analysis results, $\kappa$ slows the rate of convergence and, if $\kappa<1$, can lower the limit of $V(t)$. Fig.~\ref{fig:kappa_v} and Fig.~\ref{fig:kappa_I} were plotted with initial condition $(S(t_0), I(t_0), R(t_0), V(t_0)) = (0.54,0.41,0.05,0)$. Fig.~\ref{fig:models_compare} and Fig.~\ref{fig:peak_inf_peak_time} were plotted with initial condition $(S(t_0), I(t_0), R(t_0), V(t_0)) = (0.99,0.01,0,0)$, and the initial condition of Fig.~\ref{fig:threshold} is $(S(t_0), I(t_0), R(t_0), V(t_0)) = (0.7,0.3,0,0)$. All plots use the parameter set: $(\beta = 1.6, \gamma = .8, \rho = .12, \kappa = .8, \omega = .2)$ except for Fig.~\ref{fig:kappa_I} and Fig.~\ref{fig:kappa_v} we have used $\omega = 3$.

\begin{figure}
    \centering
    \includegraphics[width=\columnwidth]{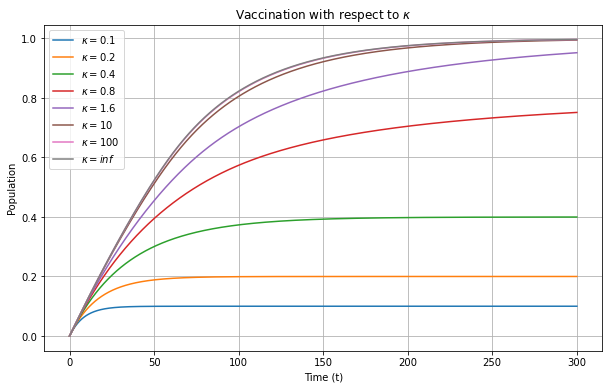}
    \caption{The impact of varying $\kappa$ on $V(t)$}
    \label{fig:kappa_v}
\end{figure}

Fig.~\ref{fig:kappa_I} illustrates how varying $\kappa$ affects the infection $I(t)$. Since $\frac{\beta}{\gamma} = 2$ in this case,
$\kappa = 0.5$ barely satisfies the condition for GAS in Theorem~\ref{thm:DFEGAS_condtion} to reach the DFE as time goes to infinity. We also notice that $\kappa$ slows the rate of convergence of $I(t)$ when comparing the cases $\kappa = (0.5, 0.6, 1.2, \infty)$.

\begin{figure}
    \centering
    \includegraphics[width=\columnwidth]{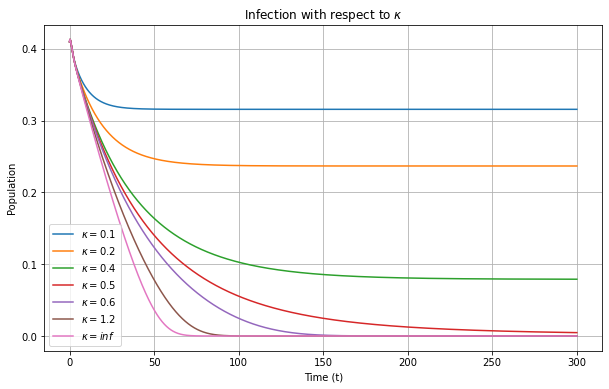}
    \caption{The impact of varying $\kappa$ on $I(t)$}
    \label{fig:kappa_I}
\end{figure}

Fig.~\ref{fig:models_compare} compares $I(t)$ and $R(t)$ between the SIRS, SIRSV, and SIRS-V$_{\kappa}$ models. Notice the key difference between the three models is that SIRSV and SIRS-V$_{\kappa}$ has $\rho > 0$ while $SIRS$ has $\rho = 0$, and SIRSV$_{\kappa}$ has $\kappa = 0.8 < \infty$ while SIRSV assumes $\kappa \to \infty$. In this particular setup, both the SIRSV and SIRSV$_{\kappa}$ models reach the DFE before $t = 40$, while the SIRS model settles at the EEP. Furthermore, we observe a higher infection peak with slower peak time in the SIRS-V$\kappa$ model.

\begin{figure}
    \centering
    \includegraphics[width=\columnwidth]{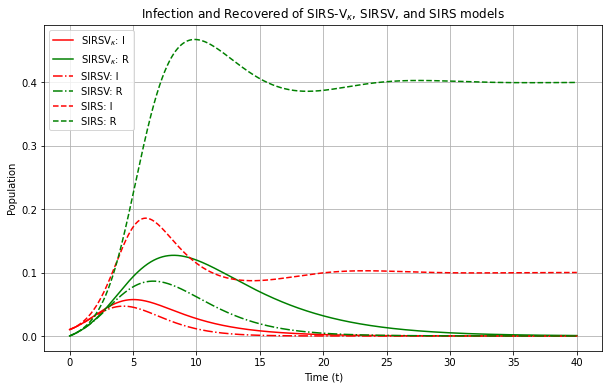}
    \caption{Comparation between SIRS, SIRSV, and SIRS-V$_{\kappa}$} models.
    \label{fig:models_compare}
\end{figure}

To investigate how
the vaccine confidence
affects the maximum peak infection value and time, we plot the maximum peak infection value and time versus $\kappa$ in Fig.~\ref{fig:peak_inf_peak_time}. Note that the maximum peak infection value decreases monotonically with $\kappa$. On the other hand, the peak infection time reaches its maximum when $\kappa = 0.2$ before decaying, which is not expected and will need further investigation.

\begin{figure}
    \centering
    \includegraphics[width=\columnwidth]{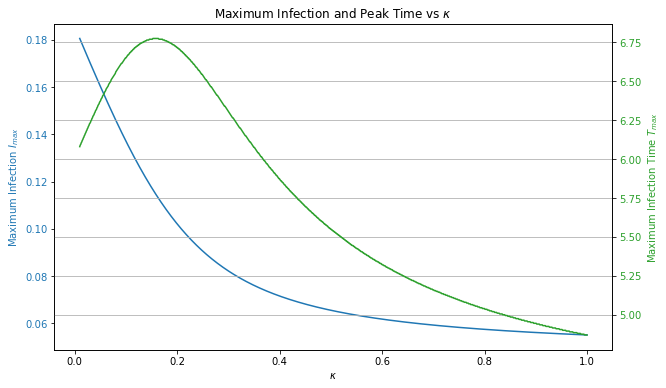}
    \caption{Maximum Infection value and peak time versus $\kappa$.}
    \label{fig:peak_inf_peak_time}
\end{figure}

Fig.~\ref{fig:threshold} plots the earliest time when $I(t) < 0.001$ with respect to different $(1 - \kappa)\frac{\beta}{\gamma}$ through varying $\kappa$. We use the condition $I(t) < 0.001$ as a way to investigate the asymptotic behavior of the system under different parameter settings. The program returns $-10$ if $I(t)$ fails to converge close enough to $0$ before $t = 100$. The sharp drop at $(1 - \kappa)\frac{\beta}{\gamma} \approx 1$ aligns with our finding in Lemma~\ref{lem:nns_eep_exists} and Theorem~\ref{thm:DFEGAS_condtion} that the DFE ceases to be GAS when $(1 - \kappa)\frac{\beta}{\gamma}$ is larger than $1$. Note that the drop occurs slightly before $1$ because we do not allow $t>100$, which is much smaller than asymptotic time. Finally, all the simulations indicate that the EEP
has
a large region of convergence as long as it exists in $\mathcal{S}$. More investigation is required to firmly conclude these findings analytically.

\begin{figure}[h]
    \centering
    \includegraphics[width=\columnwidth]{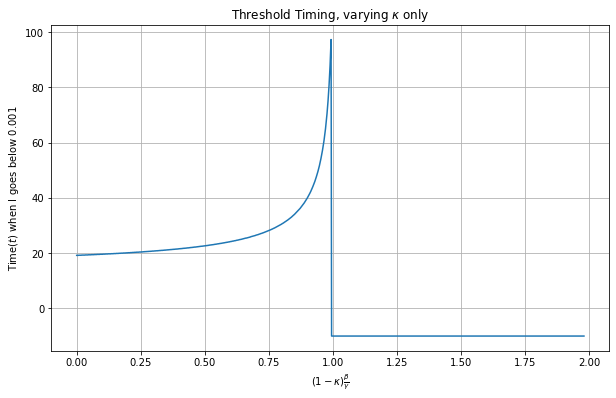}
    \caption{Threshold behavior with respect to $(1 - \kappa)\frac{\beta}{\gamma}$.}
    \label{fig:threshold}
\end{figure}

\section{Conclusion and Future Work}
In this paper, we have proposed an SIRS-V$_\kappa$ model,
where $\kappa$ is the vaccine confidence level. We have proved the existence of unique endemic and disease-free states, both of which depend on $\kappa$.
Furthermore, $\kappa$ acts as an important component in determining the necessary and sufficient condition of the global asymptotic stability, namely $(1 - \kappa)\frac{\beta}{\gamma} \leq 1$, of the disease-free equilibrium. From the perspective of control, manipulating the transmission rate $\beta$, recovery rate $\gamma$, and the vaccine confidence $\kappa$ are equally viable ways of mitigating epidemic spread. From the disease modeling perspective, we have shown through analytical and numerical methods that ignoring vaccine confidence in the model will introduce significant biases when determining the system's stability condition, maximum peak infection time, and its threshold behaviors.
The dependence of COVID-19 prevalence on vaccine hesitance should encourage Americans to become vaccinated if they have not done so already.

For future work, we are interested in extending our findings from the scalar case to the networked case to further inspect the impact of varying vaccine confidence across different geographic/demographic populations. Moreover, we also would like to provide analysis on
the region of convergence of EEP and DFE when $(1 - \kappa)\frac{\beta}{\gamma} > 1$, and the effect of vaccine confidence on the maximum peaking time of $I(t)$.

\bibliographystyle{IEEEtran}
\bibliography{bibrefs}









\end{document}